\newcommand{\Rmnum}[1]{\expandafter\@slowromancap\romannumeral #1@}
\newtheorem{definition}{Definition}
\newtheorem{proposition}[definition]{Proposition}
\newtheorem{Lemma}[definition]{Lemma}
\newtheorem{Theorem}[definition]{Theorem}
\newtheorem{conjecture}[definition]{Conjecture}
\newtheorem{remark}[definition]{Remark}
\newtheorem{example}{Example}
\newtheorem{question}[definition]{Question}
\def\squareforqed{\hbox{\rlap{$\sqcap$}$\sqcup$}}
\def\qed{\ifmmode\squareforqed\else{\unskip\nobreak\hfil
		\penalty50\hskip1em\null\nobreak\hfil\squareforqed
		\parfillskip=0pt\finalhyphendemerits=0\endgraf}\fi}
\def\endenv{\ifmmode\;\else{\unskip\nobreak\hfil
		\penalty50\hskip1em\null\nobreak\hfil\;
		\parfillskip=0pt\finalhyphendemerits=0\endgraf}\fi}
\newenvironment{proof}{\noindent \textbf{{Proof.~} }}{\qed}
\def\Dbar{\leavevmode\lower.6ex\hbox to 0pt
	{\hskip-.23ex\accent"16\hss}D}
\def\url@leostyle{%
	\@ifundefined{selectfont}{\def\UrlFont{\sf}}{\def\UrlFont{\small\ttfamily}}}
\def\bcj{\begin{conjecture}}
	\def\ecj{\end{conjecture}}
\def\bcr{\begin{corollary}}
	\def\ecr{\end{corollary}}
\def\bd{\begin{definition}}
	\def\ed{\end{definition}}
\def\bea{\begin{eqnarray}}
\def\eea{\end{eqnarray}}
\def\bem{\begin{enumerate}}
	\def\eem{\end{enumerate}}
\def\bex{\begin{example}}
	\def\eex{\end{example}}
\def\bim{\begin{itemize}}
	\def\eim{\end{itemize}}
\def\bl{\begin{lemma}}
	\def\el{\end{lemma}}
\def\bma{\begin{bmatrix}}
	\def\ema{\end{bmatrix}}
\def\bpf{\begin{proof}}
	\def\epf{\end{proof}}
\def\bpp{\begin{proposition}}
	\def\epp{\end{proposition}}
\def\bqu{\begin{question}}
	\def\equ{\end{question}}
\def\br{\begin{remark}}
	\def\er{\end{remark}}
\def\bt{\begin{theorem}}
	\def\et{\end{theorem}}
\def\btb{\begin{tabular}}
	\def\etb{\end{tabular}}
\newcommand{\nc}{\newcommand}
\def\a{\alpha}
\def\b{\beta}
\def\r{\rho}
\def\s{\sigma}
\def\o{\omega}
\nc{\bbA}{\mathbb{A}} \nc{\bbB}{\mathbb{B}} \nc{\bbC}{\mathbb{C}}
\nc{\bbD}{\mathbb{D}} \nc{\bbE}{\mathbb{E}} \nc{\bbF}{\mathbb{F}}
\nc{\bbG}{\mathbb{G}} \nc{\bbH}{\mathbb{H}} \nc{\bbI}{\mathbb{I}}
\nc{\bbJ}{\mathbb{J}} \nc{\bbK}{\mathbb{K}} \nc{\bbL}{\mathbb{L}}
\nc{\bbM}{\mathbb{M}} \nc{\bbN}{\mathbb{N}} \nc{\bbO}{\mathbb{O}}
\nc{\bbP}{\mathbb{P}} \nc{\bbQ}{\mathbb{Q}} \nc{\bbR}{\mathbb{R}}
\nc{\bbS}{\mathbb{S}} \nc{\bbT}{\mathbb{T}} \nc{\bbU}{\mathbb{U}}
\nc{\bbV}{\mathbb{V}} \nc{\bbW}{\mathbb{W}} \nc{\bbX}{\mathbb{X}}
\nc{\bbZ}{\mathbb{Z}}
\nc{\bA}{{\bf A}} \nc{\bB}{{\bf B}} \nc{\bC}{{\bf C}}
\nc{\bD}{{\bf D}} \nc{\bE}{{\bf E}} \nc{\bF}{{\bf F}}
\nc{\bG}{{\bf G}} \nc{\bH}{{\bf H}} \nc{\bI}{{\bf I}}
\nc{\bJ}{{\bf J}} \nc{\bK}{{\bf K}} \nc{\bL}{{\bf L}}
\nc{\bM}{{\bf M}} \nc{\bN}{{\bf N}} \nc{\bO}{{\bf O}}
\nc{\bP}{{\bf P}} \nc{\bQ}{{\bf Q}} \nc{\bR}{{\bf R}}
\nc{\bS}{{\bf S}} \nc{\bT}{{\bf T}} \nc{\bU}{{\bf U}}
\nc{\bV}{{\bf V}} \nc{\bW}{{\bf W}} \nc{\bX}{{\bf X}}
\nc{\bZ}{{\bf Z}}
\nc{\cA}{{\cal A}} \nc{\cB}{{\cal B}} \nc{\cC}{{\cal C}}
\nc{\cD}{{\cal D}} \nc{\cE}{{\cal E}} \nc{\cF}{{\cal F}}
\nc{\cG}{{\cal G}} \nc{\cH}{{\cal H}} \nc{\cI}{{\cal I}}
\nc{\cJ}{{\cal J}} \nc{\cK}{{\cal K}} \nc{\cL}{{\cal L}}
\nc{\cM}{{\cal M}} \nc{\cN}{{\cal N}} \nc{\cO}{{\cal O}}
\nc{\cP}{{\cal P}} \nc{\cQ}{{\cal Q}} \nc{\cR}{{\cal R}}
\nc{\cS}{{\cal S}} \nc{\cT}{{\cal T}} \nc{\cU}{{\cal U}}
\nc{\cV}{{\cal V}} \nc{\cW}{{\cal W}} \nc{\cX}{{\cal X}}
\nc{\cZ}{{\cal Z}}
\nc{\hA}{{\hat{A}}} \nc{\hB}{{\hat{B}}} \nc{\hC}{{\hat{C}}}
\nc{\hD}{{\hat{D}}} \nc{\hE}{{\hat{E}}} \nc{\hF}{{\hat{F}}}
\nc{\hG}{{\hat{G}}} \nc{\hH}{{\hat{H}}} \nc{\hI}{{\hat{I}}}
\nc{\hJ}{{\hat{J}}} \nc{\hK}{{\hat{K}}} \nc{\hL}{{\hat{L}}}
\nc{\hM}{{\hat{M}}} \nc{\hN}{{\hat{N}}} \nc{\hO}{{\hat{O}}}
\nc{\hP}{{\hat{P}}} \nc{\hR}{{\hat{R}}} \nc{\hS}{{\hat{S}}}
\nc{\hT}{{\hat{T}}} \nc{\hU}{{\hat{U}}} \nc{\hV}{{\hat{V}}}
\nc{\hW}{{\hat{W}}} \nc{\hX}{{\hat{X}}} \nc{\hZ}{{\hat{Z}}}
\nc{\hn}{{\hat{n}}}
\def\max{\mathop{\rm max}}
\def\min{\mathop{\rm min}}
\def\tr{\mathop{\rm Tr}}
\newcommand{\bra}[1]{\langle#1|}
\newcommand{\ket}[1]{|#1\rangle}
\newcommand{\norm}[1]{\lVert#1\rVert}
\def\Dbar{\leavevmode\lower.6ex\hbox to 0pt
	{\hskip-.23ex\accent"16\hss}D}
\begin{document}
	\title{Families of Schmidt-number witnesses for high dimensional quantum states}
	
	\author{Xian Shi}\email[]
	{shixian01@gmail.com}
	\affiliation{College of Information Science and Technology,
		Beijing University of Chemical Technology, Beijing 100029, China}

	%
	
	
	
	\date{\today}
	\begin{abstract}
		Higher dimensional entangled states demonstrate significant advantages in quantum information processing tasks. Schmidt number is a quantity on the entanglement dimension of a bipartite state. Here we build families of $k$-positive maps from the symmetric information complete positive operator-valued measurements and mutually unbiased bases, and we also present the Schmidt number witnesses, correspondingly. At last, based on the witnesses obtained from mutually unbiased bases, we show the distance between a bipartite state and the set of states with Schmidt number less than $k$.
	\end{abstract}

	\pacs{03.65.Ud, 03.67.Mn}
	\maketitle

\section{Introduction}
\indent	Entanglement is one of the most fundamental features in quantum mechanics compared to classical physics \cite{horodecki2009quantum,plenio2014introduction}. It also plays critical roles in quantum information and quantum computation theory, such as quantum cryptography \cite{ekert1991quantum}, teleportation \cite{bennett1993teleporting}, and superdense coding \cite{bennett1992communication}. \par
From the start of quantum information theory, lots of efforts have been devoted to the problems of distinguishing whether a state is separable or entangled \cite{lewenstein2001characterization,chen2003matrix,rudolph2005further,guhne2006entanglement,zhang2008entanglement,spengler2012entanglement,shen2015separability,shang2018enhanced,sarbicki2020family,shi2023family,shi2024entanglement} and quantifying entanglement of the state \cite{wootters1998entanglement,christandl2004squashed,chen2005concurrence,de2007lower,li2020improved}. A commonly used method to certify the entanglement of a state is to build an effective entanglement witness. There are many ways to construct the entanglement witness \cite{guhne2009entanglement,chruscinski2014entanglement}. In 2018, Chruscinski $et$ $al.$ showed a method to construct entanglement witnesses from mutually unbiased bases (MUB) \cite{chruscinski2018entanglement}. Whereafter, the authors in \cite{li2019mutually,siudzinska2021entanglement,siudzinska2022indecomposability} generalized the method to build entanglement witnesses with other classes of positive operator-valued measurements (POVM).  One of the most essential entanglement measures is the Schmidt number (SN) \cite{terhal2000schmidt}, this quantity indicates the lowest dimension of the system needed 
to generate the entanglement. Furthermore, genuine high dimensional entanglement plays important roles in many quantum information tasks, such as, quantum communication \cite{cozzolino2019high}, quantum control \cite{kues2017chip} and universal quantum computation \cite{wang2020qudits,paesani2021scheme}. 

However, like most entanglement measures, it is hard to obtain the SN of a generic entangled state. Recently, the method to bound the SN of an entangled state attracted much attention from the relevant researchers \cite{bavaresco2018measurements,wyderka2023construction,liu2023characterizing,simon2023,liu2024bounding,tavakoli2024enhanced}. In \cite{bavaresco2018measurements}, Bavaresco $et$ $al.$ proposed a method to bound the dimension of an entangled state. Recently, Liu $et$ $al.$ presented the results of SN of a given state based on its covariance matrix \cite{liu2023characterizing,liu2024bounding}. Tavakoli and Morelli showed the bound of SN of a given state with the help of MUBs and SIC POVMs \cite{tavakoli2024enhanced}. Similar to entanglement, a straightford method to certify the dimension of an entangled state is by constructing the $k$-postive maps \cite{terhal2000schmidt} or SN witnesses \cite{terhal2000schmidt,sanpera2001schmidt,wyderka2023construction}. However, there are few results obtained on constructing the SN witnesses with the help of certain POVMs.

In this manuscript, we will present the methods to construct $k$-positive maps with the use of symmetric information complete (SIC) POVMs and MUBs, which generalizes the methods of \cite{chruscinski2018entanglement}. we also give the corresponding SN witness. Moreover, we compare the $k$-positive maps here with those in \cite{terhal2000schmidt,tomiyama1985geometry}. At last, we present the lower bounds of the distance between a bipartite mixed state and the set of states with Schmidt number less than $k$ based on the SN witnesses, which is built from MUBs. 
\section{Preliminary Knowledge}
In the manuscript, the quantum systems we considered here are finite dimensions. Next we denote $\mathcal{D}(\mathcal{H}_{AB})$ as the set consisting of the states of $\mathcal{H}_{AB},$  $$\mathcal{D}(\mathcal{H}_{AB})=\{\rho_{AB}|\rho_{AB}\ge0,\tr\rho_{AB}=1\}.$$ And we denote $\ket{\psi_d}=\sum_{i=0}^{d-1}\ket{ii}$ as the maximally entangled states of $\mathcal{H}_{AB}$ with $\mathrm{Dim}(\mathcal{H}_A)=\mathrm{Dim}(\mathcal{H}_B)=d.$

In this section, we will first recall the knowledge of the Schmidt number for a bipartite mixed state, then we will recall the definition and properties of SICs and MUBs, correspondingly. At last, we will present the definition of the distance to the set of states with Schmidt number less than k, $D_k(\cdot).$
\subsection{Schmidt Number}
Assume $\ket{\psi}_{AB}=\sum_{ij}c_{ij}\ket{ij}$ is a pure state in $\mathcal{H}_{AB}$ with $\mathrm{Dim}(\mathcal{H}_A)=d_A$ and $\mathrm{Dim}(\mathcal{H}_B)=d_B.$ 
There always exists orthonormal bases $\{\ket{\tilde{i}}_A\}$ and $\{\ket{\tilde{i}}_B\}$ in $\mathcal{H}_A$ and $\mathcal{H}_B,$ respectively such that $$\ket{\psi}_{AB}=\sum_{i=1}^k\sqrt{\lambda_i}\ket{\tilde{i}\tilde{i}},$$ here $\lambda_i>0$ and $\sum_i\lambda_i^2=1.$ Here the nonzero number $k$ is called the Schmidt number of $\ket{\psi}$ \cite{terhal2000schmidt}, $i.$ $e.$, $SN(\ket{\psi})=k.$ The Schmidt number of a mixed state $\rho_{AB}$ is defined as follows \cite{terhal2000schmidt},
\begin{align}
SN(\rho)=\min_{\rho=\sum_ip_i\ket{\psi_i}\bra{\psi_i}}\max_i SR(\ket{\psi_i}),
\end{align}
where the minimization takes over all the decompositions of $\rho_{AB}=\sum_ip_i\ket{\psi_i}\bra{\psi_i}.$ The Schmidt number is entanglement monotone, and it can be seen as a key quantity on the power of entanglement resources.\par 
Through the definition of Schmidt number, one can classify the states of $\mathcal{H}_{AB}$ as follows. Let 
\begin{align}
S_k=\{\rho|SN(\rho)\le k\},
\end{align}
due to the definition of $S_k$, we have $S_k\subset S_{k+1},$ and $S_1$ is the set of separable states.  Assume $\rho_{AB}\in S_k$ is a bipartite state in $\mathcal{H}_{AB}$ with $\mathrm{Dim}(\mathcal{H}_A)=\mathrm{Dim}(\mathcal{H}_B)=d$, the authors in \cite{terhal2000schmidt} showed that
\begin{align}
\tr(\rho\ket{\psi_d}\bra{\psi_d})\le \frac{k}{d},\label{f1}
\end{align}
 Besides, from the definition of $S_k$, we have $S_k$ is a convex set. Hence, we could construct the Schmidt number witness $W_k$ to validate a bipartite state $\rho_{AB}\in S_{k+1},$ due to the Hahn-Banach theorem, if
\begin{align*}
\tr(W_k\rho_k)\ge& 0 \hspace{5mm} \forall \rho_k\in S_{k},\\
\tr(W_k\rho)<& 0\hspace{5mm} \exists \rho\in\mathcal{D}(\mathcal{H}_{AB}),
\end{align*}
then we call $W_k$ a SN-(k+1) witness.

Then we recall the following result on the Schmidt number of a bipartite state obtained in \cite{terhal2000schmidt},
\begin{Lemma}\label{l1}
	Assume $\rho$ is a bipartite state on $\mathcal{H}_{AB}.$ $\rho$ has Schmidt number at least $k+1$ if and only if there exists a $k$-positive linear map $\Lambda_k$ such that 
	\begin{align}
	(I\otimes\Lambda_k)(\rho)\ngeq0.\label{kp}
	\end{align}
	The linear Hermitticity-preserving map $\Lambda$ is $k$-positive if and only if 
	\begin{align}
	(I\otimes\Lambda)(\ket{\psi_k}\bra{\psi_k})\ge0,
	\end{align}
	here $\ket{\psi_k}$ are arbitrary maximally entangled state with Schmidt number $k.$ Finally, if $\Lambda$ is $k$-positive, then $\Lambda^{\dagger}$ defined by $\tr A^{\dagger}\Lambda(B)=\tr\Lambda^{\dagger}(A^{\dagger})B$ for all $A$ and $B$, is also $k$-positive.
\end{Lemma}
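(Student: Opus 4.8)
The plan is to make the middle assertion --- the Choi--Jamio\l kowski characterization of $k$-positivity --- the load-bearing step, and then obtain the other two from it together with convex separation (already invoked above) and the basic symmetry of the Jamio\l kowski correspondence. The fact I will use throughout is elementary: $\Lambda$ is $k$-positive exactly when $(I\otimes\Lambda)(\proj{\eta})\ge0$ for every vector $\ket{\eta}$ of Schmidt rank at most $k$. Indeed, positivity of $(I_m\otimes\Lambda)$ for $m\le k$ is equivalent to positivity on rank-one projectors in $\mathbb C^m\otimes\mathcal H$, all of whose vectors have Schmidt rank $\le m\le k$, and a vector of Schmidt rank $\le k$ in a larger space is supported on a $k$-dimensional subspace of its first factor.

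To prove the nontrivial (``if'') half of the characterization, take an arbitrary vector $\ket{\eta}$ of Schmidt rank $r\le k$ and factor it as $\ket{\eta}=(A\otimes B)\ket{\Omega_k}$, where $\ket{\Omega_k}=\sum_{i=1}^{k}\ket{ii}$, the map $B$ is an isometry with orthonormal columns $\ket{b_1},\dots,\ket{b_k}\in\mathcal H_B$, and $A$ absorbs the Schmidt weights of $\ket{\eta}$ into $\mathcal H_A$. Since $I\otimes\Lambda$ commutes with left conjugation by $A\otimes I$,
\[
(I\otimes\Lambda)(\proj{\eta})=(A\otimes I)\,(I\otimes\Lambda)\bigl((I\otimes B)\proj{\Omega_k}(I\otimes B^\dagger)\bigr)\,(A^\dagger\otimes I),
\]
and $(I\otimes B)\ket{\Omega_k}=\sum_{i=1}^{k}\ket{i}\otimes\ket{b_i}$ is a maximally entangled state of Schmidt number $k$; by hypothesis the bracketed operator is positive semidefinite, and conjugation by $A\otimes I$ keeps it so, whence $(I\otimes\Lambda)(\proj{\eta})\ge0$ and $\Lambda$ is $k$-positive. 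The ``only if'' half is immediate since $\ket{\psi_k}$ has Schmidt rank $k$. Rephrased through the Jamio\l kowski isomorphism, this says: $\Lambda$ is $k$-positive iff its Choi operator $C_\Lambda:=(I\otimes\Lambda)(\proj{\psi_d})$ is positive on all Schmidt-rank-$\le k$ vectors (is $k$-block-positive).

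The two remaining statements then follow. If $\rho\in S_k$, write $\rho=\sum_i p_i\proj{\psi_i}$ with $\ket{\psi_i}$ of Schmidt rank $\le k$; then $(I\otimes\Lambda)(\rho)=\sum_i p_i(I\otimes\Lambda)(\proj{\psi_i})\ge0$ for every $k$-positive $\Lambda$ by the previous step, which is the contrapositive of the ``only if'' half of \eqref{kp}. Conversely, if $SN(\rho)\ge k+1$ then $\rho\notin S_k$; as $S_k$ is closed and convex with extreme points the Schmidt-rank-$\le k$ pure states, Hahn--Banach yields a Hermitian $W$ with $\tr(W\rho)<0$ and $\bra{\phi}W\ket{\phi}\ge0$ for all Schmidt-rank-$\le k$ $\ket{\phi}$, i.e. $W$ is $k$-block-positive; by the previous step the map $\Lambda_k$ associated to $W$ through the Jamio\l kowski correspondence (up to the customary partial transpose and normalization) is $k$-positive, and the same correspondence turns $\tr(W\rho)<0$ into $(I\otimes\Lambda_k)(\rho)\ngeq0$, establishing \eqref{kp}. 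For the last claim, the Choi operator of $\Lambda^\dagger$ is obtained from $C_\Lambda$ by swapping the two tensor factors and transposing; each operation preserves positive semidefiniteness and the Schmidt rank of test vectors, hence preserves $k$-block-positivity, so $C_\Lambda$ is $k$-block-positive iff $C_{\Lambda^\dagger}$ is --- that is, $\Lambda^\dagger$ is $k$-positive whenever $\Lambda$ is.

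The crux is the ``if'' direction of the characterization: one must upgrade positivity of $I\otimes\Lambda$ on the rigid family of rank-$k$ maximally entangled states to positivity on \emph{all} Schmidt-rank-$\le k$ rank-one inputs. The factorization $\ket{\eta}=(A\otimes B)\ket{\Omega_k}$, together with the observation that $X\mapsto(A\otimes I)X(A^\dagger\otimes I)$ commutes with $I\otimes\Lambda$ and preserves positivity, is precisely what bridges the gap; once it is in hand, the Schmidt-number criterion and the stability of $k$-positivity under the adjoint are routine consequences of convex duality and of the symmetry of the Choi isomorphism, respectively.
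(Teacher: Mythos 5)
Your argument is correct. Note, however, that the paper does not prove this statement at all: it is quoted verbatim as a recalled result from Terhal and Horodecki (Ref.~\cite{terhal2000schmidt}), so there is no in-paper proof to compare against. What you have reconstructed is essentially the standard argument of that reference: the factorization $\ket{\eta}=(A\otimes B)\ket{\Omega_k}$ with $A$ absorbing the Schmidt weights, together with the fact that conjugation by $A\otimes I$ commutes with $I\otimes\Lambda$ and preserves positivity, is exactly the step that upgrades positivity on rank-$k$ maximally entangled inputs to full $k$-positivity; the Schmidt-number criterion then follows from Hahn--Banach separation of the closed convex set $S_k$ plus the Choi--Jamio\l kowski dictionary, and the adjoint statement from the swap-and-transpose symmetry of the Choi operator. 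The only places you compress are (i) the identity $\bra{\psi}(I\otimes\Lambda)(\proj{\eta})\ket{\psi}=\bra{\chi}C_\Lambda\ket{\chi}$ with $\ket{\chi}$ of Schmidt rank $\le k$, which is what actually licenses the phrase ``rephrased through the Jamio\l kowski isomorphism,'' and (ii) the sign bookkeeping that converts $\tr(W\rho)<0$ into a negative expectation value of $(I\otimes\Lambda_k)(\rho)$ in the maximally entangled vector; both are routine and standard, so the compression is acceptable.
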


In \cite{tomiyama1985geometry,terhal2000schmidt}, the authors showed a family of positive maps $\Lambda_p(X)$ with the following form
\begin{align}
\Lambda_p(X)=\tr(X)\mathbb{I}-pX,\label{klp}
\end{align}
where $X$ is a linear operator of $\mathcal{H}$. When $k\ge \frac{1}{p}>k+1$, $\Lambda_p$ is $k$-positive.
\subsection{SICs and MUBs}
Assume $\mathcal{H}_d$ is a Hilbert space with dimension $d,$ $\{{E_i}=\frac{1}{d}\ket{\phi_i}\bra{\phi_i}|i=1,2,\cdots,d^2\}$ is a positive operator valued measure(POVM) of $\mathcal{H}_d,$ here $\ket{\phi_i}$ are pure states with
\begin{align}
|\bra{\phi_j}\phi_k\rangle|^2=\frac{1}{d+1},\hspace{4mm}\forall j\ne k,
\end{align}
then $\{E_i\}_{i=1}^{d^2}$ is a SIC-POVM. The existence of SIC-POVMs in every dimension is still an open problem \cite{horodecki2022five}, readers who are interesting to the problem can refer to \cite{zauner1999quantum,scott2010symmetric,scott2017sics}. 

Assume $\rho$ is a state of $\mathcal{D}(\mathcal{H}_d),$ $\{E_i|i=1,2,\cdots,d^2\}$ is a SIC-POVM, then 
\begin{align}
\sum_{j=1}^{d^2}|\tr(P_j\rho)|^2=\frac{\tr\rho^2+1}{d+d^2},
\end{align}
which is showed in \cite{rastegin2014notes}.

Next we recall the definition of MUBs. Assume $\{\ket{e^l_i}|i=1,2,\cdots,d\}_{l=1}^L$ are $L$ orthonormal bases, and 
\begin{align*}
|\bra{e^m_j}f^{m^{'}}_k\rangle|^2=&\frac{1}{d}\hspace{5mm}\forall j,k\hspace{5mm} \textit{ m$\ne $$m^{'}$},\\
|\bra{e^m_j}f^{m}_k\rangle|^2=&\delta_{jk}.
\end{align*} 
then they are MUBs. For any space with dimension $d,$ there exists at most $d+1$ MUBs. If the upper bound is reached, the set of MUBs is called a complete set. It is well known that the complete sets of MUBs are existed when the dimension of the Hilbert space is a number with prime power. However, the existence of the complete sets of MUBs is unknown for arbitrary dimensional systems, even if the dimension is 6 \cite{horodecki2022five}.

Let $\{Q^{(\a)}_i=\ket{e^{(\a)}_i}\bra{e^{(\a)}_i}|i=1,2,\cdots,d\}_{\a=1}^L$ are $L$ MUBs and $\rho$ is a state in $\mathcal{D}(\mathcal{H}_d)$, then 
\begin{align}
	\sum_{\a=1}^L\sum_{i=1}^d|\tr(\r Q_i^{(\a)})|^2\le \tr(\r^2)+\frac{L-1}{d},
\end{align}
the above inequality is obtained in \cite{wu2009entropic}.

At last, we present the distance to the set $S_k$ for a bipartite state, $D_{k}(\cdot)$. Assume $\rho_{AB}$ is a bipartite mixed state, its distance to the set $S_k$ in terms of the Frobenius norm is defined as
\begin{align}
D_k(\rho)=\min_{\s\in S_k}\norm{\r-\s}_F,
\end{align}
where the minimum takes over all the state in $S_k.$
\section{k-positive maps based on SICs and MUBs}
Assume $\mathcal{H}_d$ is a Hilbert space with dimension $d,$ $\mathcal{M}=\{P_i=\frac{1}{d}\ket{\phi_i}\bra{\phi_i}\}_{i=1}^{d^2}$ is a SIC-POVMs. Next we present a class of $k$-positive maps $\Lambda(\cdot)$ with the help of the SIC-POVM $\mathcal{M}$. Let $\mathcal{O}$ be an orthogonal rotation in $\mathbb{R}^d$ around the axis $n_{*}=\frac{(1,1,\cdots,1)}{\sqrt{d}}$, that is, $\mathcal{O}n_{*}=n_{*},$
\begin{widetext}
\begin{align}
\Lambda(X)=  \frac{\mathbb{I}_d}{d}\tr(X)-h\sum_{g,l=1}^{d^2}\mathcal{O}_{gl}\tr[(X-\frac{\mathbb{I}_d}{d}\tr(X))P_l]P_g,\label{f2}
\end{align}
\end{widetext}
here $h=\sqrt{\frac{d^4+d^3}{(kd-1)(kd+k-2)}},$ $P_l$ and $P_k$ take over all the elements in a SIC-POVM $\mathcal{M}.$
\begin{Theorem}\label{t1}
	The map $\Lambda(\cdot)$ defined in (\ref{f2}) is $k$-positive.
\end{Theorem}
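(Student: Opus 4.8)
The plan is to invoke Lemma~\ref{l1}: $\Lambda$ is $k$-positive iff $(I\ox\Lambda)(\proj{\psi_k})\ge0$ for every maximally entangled state $\ket{\psi_k}$ of Schmidt rank $k$. Writing $\ket{\psi_k}$ in Schmidt form and testing against an arbitrary vector turns this into the block-positivity statement $\sum_{\mu,\nu=1}^{k}\bra{c_\mu}\Lambda(\ketbra{e_\mu}{e_\nu})\ket{c_\nu}\ge0$ for every orthonormal system $\{\ket{e_\mu}\}_{\mu=1}^{k}$ in $\mathcal H_d$ and all vectors $\{\ket{c_\mu}\}_{\mu=1}^{k}$. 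Encoding these data in the single operator $M=\sum_{\mu}\ketbra{e_\mu}{c_\mu}$ -- which runs over all operators of rank $\le k$ as the $\ket{e_\mu},\ket{c_\mu}$ vary, with $\tr(M^\dagger M)=\sum_\mu\norm{c_\mu}^{2}$ -- and inserting \eqref{f2} together with $\tr P_l=\tfrac1d$, $\sum_l P_l=\mathbb{I}_d$ and $\sum_l\mathcal O_{gl}=1$ (i.e.\ $\mathcal On_*=n_*$), a direct computation collapses $k$-positivity of $\Lambda$ into the scalar inequality
\begin{align}
\sum_{g,l=1}^{d^{2}}\mathcal O_{gl}\,\abs{\bra{\phi_l}M\ket{\phi_g}}^{2}\ \le\ \Big(\tfrac{d}{h}+1\Big)\tr(M^\dagger M)\qquad\text{for every }M\text{ with }\rank M\le k.\label{pp1}
\end{align}

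I would then peel off the uniform component of $\mathcal O$. Writing $\mathcal O_{gl}=\tfrac1{d^{2}}+(\mathcal O_\perp)_{gl}$ with $\mathcal O_\perp=\mathcal O-n_*n_*^{\mathsf T}$, and using $\sum_l\proj{\phi_l}=d\,\mathbb{I}_d$ to get $\sum_{g,l}\abs{\bra{\phi_l}M\ket{\phi_g}}^{2}=d^{2}\tr(M^\dagger M)$, the $\tfrac1{d^{2}}$-term of \eqref{pp1} contributes exactly $\tr(M^\dagger M)$, so \eqref{pp1} becomes $\sum_{g,l}(\mathcal O_\perp)_{gl}B_{gl}\le\tfrac{d}{h}\tr(M^\dagger M)$ with $B_{gl}:=\abs{\bra{\phi_l}M\ket{\phi_g}}^{2}$. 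As $\mathcal O_\perp$ kills $n_*$ on both sides, the left side is unchanged if $B$ is replaced by its doubly centred version $\hat B=(I-n_*n_*^{\mathsf T})B(I-n_*n_*^{\mathsf T})$, and then $\sum_{g,l}(\mathcal O_\perp)_{gl}B_{gl}=\tr(\mathcal O_\perp^{\mathsf T}\hat B)\le\norm{\mathcal O_\perp}_\infty\norm{\hat B}_1=\norm{\hat B}_1$, because $\mathcal O_\perp$ is a partial isometry. So it remains to prove $\norm{\hat B}_1\le\tfrac{d}{h}\tr(M^\dagger M)$. Expanding $\bra{\phi_l}M\ket{\phi_g}$ through the singular value decomposition of $M$ (which has at most $k$ terms) displays $B$, hence $\hat B$, as a sum of at most $k^{2}$ rank-one matrices, so $\norm{\hat B}_1\le\sqrt{\rank\hat B}\,\norm{\hat B}_F\le k\,\norm{\hat B}_F$.

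The Frobenius norm is computable in closed form from the SIC identity $\sum_j\abs{\tr(P_j\rho)}^{2}=\tfrac{\tr\rho^{2}+1}{d+d^{2}}$: taking $\rho$ rank-one gives $\sum_g\abs{\braket{y}{\phi_g}}^{4}=\tfrac{2d}{d+1}\norm{y}^{4}$ for any $\ket y$, and a second application to the operator $MM^\dagger$ gives $\sum_l\big(\bra{\phi_l}MM^\dagger\ket{\phi_l}\big)^{2}=\tfrac{d}{d+1}\big((\tr MM^\dagger)^{2}+\tr\big((MM^\dagger)^{2}\big)\big)$. Using these with $\ket y=M^\dagger\ket{\phi_l}$, and writing $t:=\tr(M^\dagger M)$, $p:=\tr\big((M^\dagger M)^{2}\big)$, one obtains $\sum_{g,l}B_{gl}^{2}=\tfrac{2d^{2}(t^{2}+p)}{(d+1)^{2}}$, $\norm{B\mathbf 1}^{2}=\norm{B^{\mathsf T}\mathbf 1}^{2}=\tfrac{d^{3}(t^{2}+p)}{d+1}$ and $\mathbf 1^{\mathsf T}B\mathbf 1=d^{2}t$, whence $\norm{\hat B}_F^{2}=\norm{B}_F^{2}-\tfrac1{d^{2}}\norm{B\mathbf 1}^{2}-\tfrac1{d^{2}}\norm{B^{\mathsf T}\mathbf 1}^{2}+\tfrac1{d^{4}}(\mathbf 1^{\mathsf T}B\mathbf 1)^{2}=\tfrac{(d^{2}+1)t^{2}-2dp}{(d+1)^{2}}$.

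Finally, since $M^\dagger M\ge0$ has rank $\le k$ one has $t^{2}\le k\,p$, i.e.\ $p\ge t^{2}/k$, so $\norm{\hat B}_F^{2}\le\tfrac{kd^{2}+k-2d}{k(d+1)^{2}}t^{2}$ and $\norm{\hat B}_1\le k\norm{\hat B}_F\le\tfrac{t}{d+1}\sqrt{k(kd^{2}+k-2d)}$. Squaring and comparing with $\tfrac dh\,t=t\sqrt{\tfrac{(kd-1)(kd+k-2)}{d(d+1)}}$, everything reduces to the elementary inequality $kd(kd^{2}+k-2d)\le(d+1)(kd-1)(kd+k-2)$, equivalently $(2k-1)d(kd-2)\ge k-2$, which holds for all $d\ge2$ and $1\le k\le d$; this is exactly why $h$ is chosen as stated. (For $k\ge d$, $k$-positivity coincides with complete positivity, and $\Lambda$ is then a convex combination of the completely positive map $X\mapsto\tfrac{\mathbb{I}_d}{d}\tr X$ and the $d$-positive map from the case $k=d$, so the claim is immediate.) The main obstacle is this quantitative step: the crude bounds on $\sum_{g,l}(\mathcal O_\perp)_{gl}B_{gl}$ (such as $\norm{\mathcal O_\perp}_\infty\norm{B}_1$ or $\norm{\mathcal O_\perp}_F\norm{B}_F$) lose a factor of order $d$ and are hopelessly weak, so one must use simultaneously the rank bound $\rank\hat B\le k^{2}$ (which supplies the factor $k$ rather than $d$), the exact fourth-moment evaluation -- a lucky feature of SICs, whose relevant fourth moments factor through the $2$-design identity so that no genuine $4$-design is needed -- and the purity bound $p\ge t^{2}/k$ from $\rank M\le k$; dropping any one of the three breaks the estimate.
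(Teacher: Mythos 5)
Your proof is correct --- the reduction to the scalar inequality on rank-$\le k$ operators $M$, the exact fourth-moment evaluations from the SIC $2$-design identity, and the closing polynomial inequality $dk(kd^2+k-2d)\le(d+1)(kd-1)(kd+k-2)$, equivalent to $(2k-1)d(kd-2)\ge k-2$, all check out --- but it takes a genuinely different route from the paper's. The paper also reduces to maximally entangled inputs via Lemma~\ref{l1}, but then invokes a spectral shortcut: a unit-trace Hermitian operator supported on a $kd$-dimensional space is automatically positive once $\tr\rho^2\le\frac{1}{kd-1}$, so it suffices to compute the single scalar $\tr\big[(I\otimes\Lambda)(\proj{\psi_k})\big]^2$. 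In that second-moment computation the rotation $\mathcal O$ disappears for free, since the only combination that survives is $\sum_{g,t}\mathcal O_{gl}\mathcal O_{ts}\tr(P_gP_t)$, which orthogonality of $\mathcal O$ together with $\mathcal O n_*=n_*$ contracts to a multiple of $\delta_{ls}$; what remains are exactly the $2$-design moments plus the off-diagonal estimate of Lemma~\ref{l2}. Your block-positivity verification cannot exploit that cancellation, which is why you need the extra machinery of the H\"older bound $\abs{\tr(\mathcal O_\perp^{\mathsf T}\hat B)}\le\norm{\hat B}_1$, the rank bound $\rank\hat B\le k^2$ to pass to $k\norm{\hat B}_F$, and the purity bound $p\ge t^2/k$. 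The trade-off: the paper's argument is much shorter but rests on a merely sufficient positivity criterion that happens to saturate at the chosen $h$, whereas yours estimates the block-positivity functional directly and, since your final inequality carries strict slack for $d\ge2$, would certify $k$-positivity for a somewhat larger $h$ than the one fixed in \eqref{f2}. Your parenthetical about $k\ge d$ is dispensable (and its ``convex combination'' reasoning is loose): your main chain already covers that range, since the rank constraint on $M$ merely becomes vacuous and the polynomial inequality holds for all $k\ge1$, $d\ge2$. Both proofs ultimately rest on the same structural input, $\sum_j\abs{\tr(P_j\rho)}^2=\frac{\tr\rho^2+1}{d+d^2}$.
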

\begin{proof}
Due to the Lemma \ref{l1}, when we prove 
\begin{align}
(I\otimes \Lambda)(\ket{\psi_k}\bra{\psi_k})\ge 0,\label{f3}
\end{align}
for all maximally entangled state $\ket{\psi_k}$ with Schmidt number $k$, then we finish the proof. Here we utilize the following fact to prove (\ref{f3}): when $\rho$ is a Hermite matrix with trace 1, if $$\tr\rho^2\le \frac{1}{d-1},$$ then $\rho$ is a state \cite{be}.

Assume $\ket{\psi_k}=(U\otimes V)\sum_{i=0}^{k-1}\sqrt{\frac{1}{k}}\ket{ii}$, here $U$ and $V$ are arbitrary unitary operators of $\mathcal{H}_A$ and $\mathcal{H}_B$, respectively, then
\begin{align}
&\tr[(I\otimes \Lambda)(U\otimes V)(\ket{\psi_k}\bra{\psi_k})(U^{\dagger}\otimes V^{\dagger})]^2\nonumber\\
=&\tr\frac{1}{k^2}[(I\otimes\Lambda)(U\otimes V)(\sum_{i=0}^{k-1}\sum_{j=0}^{k-1}\ket{ii}\bra{jj})(U^{\dagger}\otimes V^{\dagger})]^2\nonumber\\
=&\tr\frac{1}{k^2}[\sum_{i,j=0}^{k-1}U\ket{i}\bra{j}U^{\dagger}\otimes\Lambda(V\ket{i}\bra{j}V^{\dagger})]^2\nonumber\\
=&\tr\frac{1}{k^2}[\sum_{i,j,m=0}^{k-1}U\ket{i}\bra{m}U^{\dagger}\otimes \Lambda(V\ket{i}\bra{j}V^{\dagger})\Lambda(V\ket{j}\bra{m}V^{\dagger})]\nonumber\\
=&\frac{1}{k^2}\tr\sum_{i,j=0}^{k-1}\Lambda(V\ket{i}\bra{j}V^{\dagger})\Lambda(V\ket{j}\bra{i}V^{\dagger})\label{f4}.
\end{align} 
Next we compute $\Lambda(\ket{i}\bra{j}),$ when $i=j$,
\begin{align}
\Lambda(V\ket{i}\bra{i}V^{\dagger})=\frac{\mathbb{I}_d}{d}-h\sum_{g,l=1}^{d^2}\mathcal{O}_{gl}\tr[(V\ket{i}\bra{i}V^{\dagger}-\frac{\mathbb{I}_d}{d})P_l]P_g,
\end{align}
when $i\ne j$,
\begin{align}
\Lambda(V\ket{i}\bra{j}V^{\dagger})=-h\sum_{g,l=1}^{d^2}\mathcal{O}_{gl}\tr[(V\ket{i}\bra{j}V^{\dagger})P_l]P_g.
\end{align}
\begin{widetext}
\begin{align}
(\ref{f4})=&\frac{1}{kd}+\frac{h^2}{k^2}\sum_{i=0}^{k-1}\sum_{g,l,s,t=1}^{d^2}\mathcal{O}_{gl}\mathcal{O}_{ts}\tr[(V\ket{i}\bra{i}V^{\dagger}-\frac{\mathbb{I}_d}{d})P_l]\tr[(V\ket{i}\bra{i}V^{\dagger}-\frac{\mathbb{I}_d}{d})P_s]\tr(P_gP_t)\nonumber\\
-&\frac{2h}{k^2}\sum_{i=0}^{k-1}\tr\sum_{g,l=1}^{d^2}\mathcal{O}_{gl}\tr[(V\ket{i}\bra{i}V^{\dagger}-\frac{\mathbb{I}_d}{d})P_l]\tr(P_g)+\frac{h^2}{k^2}\sum_{i\ne j}\sum_{g,l,s,t=1}^{d^2}\mathcal{O}_{gl}\mathcal{O}_{st}\tr[V\ket{i}\bra{j}V^{\dagger}P_l]\tr[V\ket{j}\bra{i}V^{\dagger}P_t]\tr(P_gP_s)\nonumber\\
=&\frac{1}{kd}+\frac{h^2}{d^2k^2}\sum_{i=0}^{k-1}\sum_{l=1}^{d^2}|\tr[(V\ket{i}\bra{i}V^{\dagger}-\frac{\mathbb{I}_d}{d})P_l]|^2+\frac{h^2}{k^2d^2}\sum_{i\ne j}\sum_{l=1}^{d^2}\tr(P_l V\ket{i}\bra{j}V^{\dagger})\tr(P_l V\ket{j}\bra{i}V^{\dagger})\nonumber\\
\le& \frac{1}{kd}+\frac{h^2(d-1)}{d^4(d+1)k}+\frac{h^2(k-1)}{d^4k}\nonumber\\
=&\frac{d^4+d^3+h^2kd+kh^2-2h^2}{kd^4(d+1)}\nonumber\\
=&\frac{1}{dk-1},
\end{align}
\end{widetext}
in the last inequality, we apply Lemma \ref{l2} in Sec. \ref{app} and the following derived in \cite{rastegin2014notes}
\begin{align}
\sum_{j=1}^{d^2}|\tr(P_j\rho)|^2=\frac{\tr\rho^2+1}{d+d^2},
\end{align}
hence we have $\Psi(\cdot)$ is $k$-positive.
\end{proof}

Based on Lemma \ref{l1}, we can provide a class of witnesses on detecting whether a bipartite state is in $S_k$ through the $k$-positive map defined in (\ref{f2}),
\begin{align}
W_k=\frac{h+d}{d^2}\mathbb{I}_d\otimes \mathbb{I}_d-h\sum_{gl}\mathcal{O}_{gl}\overline{P_l}\otimes P_g,\label{w1}
\end{align}
here $h=\sqrt{\frac{d^4+d^3}{(kd-1)(kd+k-2)}}.$

Next we present a class of $k$-positive maps $\Lambda(\cdot)$ based on the MUBs. Let $\{\ket{e^{\a}_i}|\a=1,2,\cdots,m\}_{i=1}^L$ be the MUBs, $\mathcal{O}^{(\a)}$ be $L$ orthogonal rotation in $\mathbb{R}^d$ around the axis $n_{*}=\frac{1}{d}(1,1,\cdots,1),$ that is, $\mathcal{O}^{(\a)}n_{*}=n_{*}.$ Based on the MUBs, we also present a set of $k$-positive maps,
\begin{widetext}
\begin{align}
\Theta_k(X)=  \frac{\mathbb{I}_d}{d}\tr(X)-h_s\sum_{\a=1}^L\sum_{g,l=1}^{d}\mathcal{O}^{(\a)}_{gl}\tr[(X-\frac{\mathbb{I}_d}{d}\tr(X))Q^{(\a)}_l]Q^{(\a)}_g,\label{f5}
\end{align}
\end{widetext}
here $h_s=\sqrt{\frac{1}{(dk-1)(Lk-L+d-1)}},$ $Q_l^{(\a)}=\ket{e_i^{\a}}\bra{e_i^{\a}}$, $\{\ket{e_i^{\a}}|\a=1,2,\cdots,L\}_{i=1}^d$ are $L$ MUBs. 
\begin{Theorem}
$\Theta_k(\cdot)$ defined in (\ref{f5}) are $k$-positive. 
\end{Theorem}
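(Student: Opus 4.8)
The plan is to imitate the proof of Theorem~\ref{t1}, with the SIC identities replaced by the MUB sum rule of \cite{wu2009entropic}. By Lemma~\ref{l1} it suffices to prove $(I\otimes\Theta_k)(\ket{\psi_k}\bra{\psi_k})\ge0$ for every maximally entangled $\ket{\psi_k}$ of Schmidt rank $k$, and I would again use the fact invoked in the proof of Theorem~\ref{t1}: a trace-one Hermitian operator on a space of dimension $D$ is positive semidefinite once its purity is at most $\tfrac1{D-1}$ \cite{be}. Here $D=kd$, because $(I\otimes\Theta_k)(\ket{\psi_k}\bra{\psi_k})$ lives on $\bbC^k\otimes\bbC^d$. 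One first records that this operator has unit trace: writing $\ket{\psi_k}=(U\otimes V)\tfrac1{\sqrt k}\sum_{i=0}^{k-1}\ket{ii}$, the correction term in $\Theta_k$ contributes $\sum_g\mathcal{O}^{(\a)}_{gl}=1$ (since $\mathcal{O}^{(\a)}$ fixes $n_{*}$) and then $\sum_l\tr[(V\ket{i}\bra{i}V^{\dagger}-\tfrac{\mathbb{I}_d}{d})Q^{(\a)}_l]=0$ because $\sum_lQ^{(\a)}_l=\mathbb{I}_d$.

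The core is the purity estimate. As in the computation leading to (\ref{f4}) the unitary $U$ cancels, leaving $\tfrac1{k^2}\sum_{i,j=0}^{k-1}\tr[\Theta_k(V\ket{i}\bra{j}V^{\dagger})\,\Theta_k(V\ket{j}\bra{i}V^{\dagger})]$. Evaluating $\Theta_k$ on $V\ket{i}\bra{j}V^{\dagger}$ in the diagonal ($i=j$) and off-diagonal ($i\ne j$) cases as in Theorem~\ref{t1} and expanding the product, three structural facts do the work: (i) within one basis $\tr(Q^{(\a)}_gQ^{(\a)}_s)=\d_{gs}$, so the orthogonality relation $\sum_g\mathcal{O}^{(\a)}_{gl}\mathcal{O}^{(\a)}_{gt}=\d_{lt}$ collapses the four-index sums; (ii) for distinct bases $\tr(Q^{(\a)}_gQ^{(\b)}_s)=\tfrac1d$ is constant, so after $\sum_g\mathcal{O}^{(\a)}_{gl}=1$ every cross-basis term factors through $\sum_l\tr[\,\cdot\,Q^{(\a)}_l]$, which vanishes because $\sum_lQ^{(\a)}_l=\mathbb{I}_d$ and the operators involved are traceless ($i\ne j$) or trace-matched to $\tfrac{\mathbb{I}_d}{d}$ ($i=j$); (iii) the term linear in $h_s$ dies the same way, using $\tr Q^{(\a)}_g=1$. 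What survives is
\[
\tfrac1{kd}+\tfrac{h_s^2}{k^2}\sum_{\a=1}^{L}\sum_{l=1}^{d}\Big[(y^{(\a)}_l)^2-\tfrac2d\,y^{(\a)}_l+\tfrac{k}{d^2}\Big],
\]
with $y^{(\a)}_l:=\bra{e^{(\a)}_l}VP_kV^{\dagger}\ket{e^{(\a)}_l}$ and $P_k=\sum_{i=0}^{k-1}\ketbra{i}{i}$.

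To close, note that $\sum_{l}y^{(\a)}_l=\tr(VP_kV^{\dagger})=k$ for each $\a$, and apply the inequality of \cite{wu2009entropic} to the state $\tfrac1k VP_kV^{\dagger}$, which gives $\sum_{\a=1}^{L}\sum_{l=1}^{d}(y^{(\a)}_l)^2\le k+\tfrac{k^2(L-1)}{d}$. Substituting, the surviving sum is at most $\tfrac1{kd}+\tfrac{h_s^2(Lk-L+d-k)}{kd}$, and since $Lk-L+d-k\le Lk-L+d-1$ (just $k\ge1$), the choice $h_s^2=\tfrac1{(dk-1)(Lk-L+d-1)}$ forces the purity to be at most $\tfrac1{kd}+\tfrac1{kd(dk-1)}=\tfrac1{dk-1}$. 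By the fact above, $(I\otimes\Theta_k)(\ket{\psi_k}\bra{\psi_k})$ is a state, hence $\Theta_k$ is $k$-positive by Lemma~\ref{l1}.

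The step I expect to be the real obstacle is item (ii): cleanly separating the $\a=\b$ from the $\a\ne\b$ contributions when the square of $\Theta_k$ is expanded, and verifying that all cross-basis terms (and the $h_s$-linear term) are annihilated by the resolution of identity $\sum_lQ^{(\a)}_l=\mathbb{I}_d$ — the MUB replacement for the SIC identity $\sum_l P_l=\mathbb{I}_d$ used in Theorem~\ref{t1}. Everything else is the same two-step bound as before: an exact evaluation of the linear and constant pieces in the bracket, plus the Wu-type inequality \cite{wu2009entropic} for the quadratic piece. If one prefers to follow Theorem~\ref{t1} verbatim by splitting into the $i=j$ and $i\ne j$ sums, the only extra input is the elementary estimate $\sum_l\big[(y^{(\a)}_l)^2-\sum_i|\bra{e^{(\a)}_l}V\ket{i}|^4\big]\le(1-\tfrac1k)\sum_l(y^{(\a)}_l)^2$ for the off-diagonal sum, after which \cite{wu2009entropic} again finishes the argument.
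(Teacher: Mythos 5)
Your proposal is correct and rests on the same three pillars as the paper's proof: Lemma \ref{l1}, the purity criterion of \cite{be} on the $kd$-dimensional support, and the MUB inequality of \cite{wu2009entropic}. Where you genuinely diverge is in the core algebra. The paper splits the double sum $\sum_{i,j}$ into diagonal terms (handled by applying the Wu inequality to each pure state $V\ket{i}\bra{i}V^{\dagger}$) and off-diagonal terms, and the latter require the separate estimate $\sum_{\alpha}\sum_{l}\tr(Q^{(\alpha)}_{l}V\ket{i}\bra{j}V^{\dagger})\tr(Q^{(\alpha)}_{l}V\ket{j}\bra{i}V^{\dagger})\le L/d$, i.e.\ Lemma \ref{l3} of the appendix. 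You instead observe that $\tr(Q^{(\alpha)}_{l}V\ket{i}\bra{j}V^{\dagger})\tr(Q^{(\alpha)}_{l}V\ket{j}\bra{i}V^{\dagger})=|\bra{e^{(\alpha)}_{l}}V\ket{i}|^{2}|\bra{e^{(\alpha)}_{l}}V\ket{j}|^{2}$, so the full sum over $i,j$ collapses exactly to $(y^{(\alpha)}_{l})^{2}$ with $y^{(\alpha)}_{l}=\bra{e^{(\alpha)}_{l}}VP_{k}V^{\dagger}\ket{e^{(\alpha)}_{l}}$, and a single application of \cite{wu2009entropic} to the normalized projector $\tfrac{1}{k}VP_{k}V^{\dagger}$ finishes the job. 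This buys two things: Lemma \ref{l3} is not needed at all, and your intermediate bound $\tfrac{1}{kd}+\tfrac{h_{s}^{2}(Lk-L+d-k)}{kd}$ is slightly sharper than the paper's $\tfrac{1}{kd}+\tfrac{h_{s}^{2}(Lk-L+d-1)}{kd}$ for $k>1$ (which suggests the prefactor $h_{s}$ could even be taken marginally larger). The cross-basis and $h_{s}$-linear cancellations you flag in item (ii) do go through exactly as you describe, via $\sum_{g}\mathcal{O}^{(\alpha)}_{gl}=1$ and $\sum_{l}Q^{(\alpha)}_{l}=\mathbb{I}_{d}$; they are implicit in the paper's computation as well.
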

\begin{proof}
	Here we apply similar method of the proof of Theorem \ref{t1}. Assume $\ket{\psi_k}=(U\otimes V)\sum_{i=0}^{k-1}\sqrt{\frac{1}{k}}\ket{ii}$, here $U$ and $V$ are arbitrary unitary operators of $\mathcal{H}_A$ and $\mathcal{H}_B$, respectively, if $\tr[(I_k\otimes\Theta_k)(\ket{\psi_k}\bra{\psi_k})]^2\le \frac{1}{dk-1}$, then we finish the proof \cite{be}.
	\begin{align}
	&\tr[(I\otimes \Theta_k)(U\otimes V)(\ket{\psi_k}\bra{\psi_k})(U^{\dagger}\otimes V^{\dagger})]^2\nonumber\\
=&\frac{1}{k^2}\tr\sum_{i,j=0}^{k-1}\Theta_k(V\ket{i}\bra{j}V^{\dagger})\Theta_k(V\ket{j}\bra{i}V^{\dagger})\label{f6}
	\end{align}
	Next we compute $\Theta_k(V\ket{i}\bra{j}V^{\dagger}),$ when $i=j,$
	\begin{align}
	&\Theta_k(V\ket{i}\bra{i}V^{\dagger})\nonumber\\
	=&\frac{\mathbb{I}_d}{d}-h_s\sum_{\a=1}^L\sum_{g,l=1}^{d}\mathcal{O}^{(\a)}_{gl}\tr[(V\ket{i}\bra{i}V^{\dagger}-\frac{\mathbb{I}_d}{d})Q^{(\a)}_l]Q^{(\a)}_g,\label{f7}
	\end{align}
	when $i\ne j$,
	\begin{align}
	&\Theta_k(V\ket{i}\bra{j}V^{\dagger})\nonumber\\
	=&-h_s\sum_{\a=1}^L\sum_{g,l=1}^{d}\mathcal{O}^{(\a)}_{gl}\tr[(V\ket{i}\bra{j}V^{\dagger})Q^{(\a)}_l]Q^{(\a)}_g,\label{f8}
	\end{align}
	then based on (\ref{f7}) and (\ref{f8}), we have
	\begin{widetext}
	\begin{align*}
	(\ref{f6})\le &\frac{1}{k^2}[\frac{k}{d}+h_s^2\sum_{i=1}^k\sum_{\a=1}^L\sum_{n=1}^d|\tr(V\ket{i}\bra{i}V^{\dagger}-\frac{\mathbb{I}_d}{d})Q_n^{(\a)}|^2+h_s^2\frac{Lk(k-1)}{d}]\nonumber\\
	\le &\frac{1}{k^2}[\frac{k}{d}+kh_s^2(1-\frac{1}{d})+h_s^2\frac{Lk(k-1)}{d}]=\frac{1}{dk-1},
	\end{align*}
	In the inequalities, we have used Lemma \ref{l3} in Sec. \ref{app} the following derived in \cite{wu2009entropic}
	\begin{align}
	\sum_{\a=1}^L\sum_{l=1}^d|\tr(\r Q_l^{(\a)})|^2\le \tr(\r^2)+\frac{L-1}{d}
	\end{align}
	\end{widetext}
\end{proof}
Based on Lemma \ref{l1}, we can provide a class of witnesses on detecting whether a bipartite state is in $S_k$ through the $k$-positive map defined in (\ref{f5}),
\begin{align}
W_k=\frac{1+Lh_s}{d}\mathbb{I}_d\otimes \mathbb{I}_d-h_s\sum_{g,l=1}^d\sum_{\a=1}^L\mathcal{O}_{gl}^{(\a)}\overline{Q^{(\a)}_l}\otimes Q^{(\a)}_g,\label{w2}
\end{align}
here $h_s=\sqrt{\frac{1}{(dk-1)(Lk-L+d-1)}}.$
\begin{remark}
 When $k=1$, $h_s=\frac{1}{d-1},$ 
\begin{align}
W_k=\frac{d+L-1}{d(d-1)}\mathbb{I}_d\otimes \mathbb{I}_d-\frac{1}{d-1}\sum_{g,l=1}^d\sum_{\a=1}^L\mathcal{O}_{gl}^{(\a)}\overline{Q^{(\a)}_l}\otimes Q^{(\a)}_g,
\end{align}
which is the entanglement witnesses shown in \cite{chruscinski2018entanglement}. 
\end{remark}
\begin{remark}
	When $L=d+1,$ and $\mathcal{O}=\mathbb{I,}$ the $k$-positive map (\ref{f5}) can be written as
	\begin{align}
	W_k=\frac{\mathbb{I}_d}{d}{(1+h_c)}\tr(X)-h_cX,
	\end{align}
	here $h_c=\sqrt{\frac{1}{(dk-1)(kd+k-2)}}$, and 
	\begin{align*}
	\frac{1+h_c}{dh_c}=\frac{\frac{1}{h_c}+1}{d}=\frac{\sqrt{(dk-1)(kd+k-2)}+1}{d},
	\end{align*}
	as $\sqrt{(dk-1)(kd+k-2)}\in [k,k+1),$ we have this class of $k$-positive maps constructed from complete sets of MUBs are belong to the family of (\ref{klp}) in \cite{terhal2000schmidt}.
\end{remark}

\section{Applications}
\indent In this section, we will present the distance between the state and the set $S_k$ based on the witness obtained in the last section, the method here is based on \cite{shi2023lower}.

Assume $\rho$ is a bipartite mixed state, $Y_k$ is a Schmidt number $k$ witness of $\mathcal{H}_d\otimes\mathcal{H}_d$, let $a=\frac{\tr(Y_k)}{d^2}$, $b=\sqrt{\tr(Y_k^{\dagger}Y_k)-\frac{(\tr Y_k)^2}{d^2}},$ $V_k=\frac{Y_k-a\mathbb{I}\otimes\mathbb{I}}{b}$,
\begin{align}
D_{k}(\rho_{AB})=&\min_{\s\in S_{k}}\max_{\norm{W}_F=1}\tr(W(\rho_{AB}-\s_{AB}))\nonumber\\
\ge&|\tr[V_k(\rho-\omega)]\nonumber\\
=&|\tr[\frac{Y_k}{b}(\r-\o)-\frac{a}{b}(\r-\o)]\nonumber\\
=&|\tr[\frac{Y_k}{b}(\r-\o)]\nonumber\\
\ge&-\frac{1}{b}\tr Y_k\r,\label{wi}
\end{align}
in the first inequality, $\o$ is the optimal state in $S_k$, $V_k$ is a Hermite operator and $\norm{V_k}_F=1$. The last inequality is due to that $\o\in S_k,$ $\tr Y_k\o\ge0.$ 

Next we apply the witnesses obtained in (\ref{w2}) to show the distance between a bipartite state and the set $S_k$ relying on (\ref{wi}).
\begin{Theorem}
	Assume $\rho_{AB}$ is a bipartite mixed state on $\mathcal{H}_d\otimes\mathcal{H}_d$, let $\{\ket{e^{\a}_i}|\a=1,2,\cdots,m\}_{i=1}^L$ be the MUBs of the system $\mathcal{H}_d.$
Let $W_k$ be defined in (\ref{w2}), then
\begin{align}
D_k(\r)\ge -\frac{1}{\sqrt{h_s^2(Ld-L)}}\tr W_k\r
\end{align}
here $h_s=\sqrt{\frac{1}{(dk-1)(Lk-L+d-1)}}.$
\end{Theorem}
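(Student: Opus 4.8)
The plan is to specialize the general estimate (\ref{wi}) to the witness $W_k$ of (\ref{w2}), after which only a Frobenius-norm computation remains. Recall that (\ref{wi}) gives $D_k(\r)\ge-\frac1b\tr(W_k\r)$ with $a=\tr(W_k)/d^2$, $b=\sqrt{\tr(W_k^{\dagger}W_k)-(\tr W_k)^2/d^2}$ and $V_k=(W_k-a\,\mathbb{I}_d\otimes\mathbb{I}_d)/b$. This is legitimate because $W_k$ is a genuine Schmidt-number witness, that is, $\tr(W_k\s)\ge0$ for every $\s\in S_k$ — exactly what the construction of (\ref{w2}) out of the $k$-positive map $\Theta_k$ provides via Lemma~\ref{l1} — and because $W_k$ is Hermitian (each $\mathcal{O}^{(\a)}_{gl}$ is real and each $\overline{Q^{(\a)}_l}$, $Q^{(\a)}_g$ is Hermitian), so that $W_k^{\dagger}W_k=W_k^2$ and $V_k$ is a Hermitian operator of unit Frobenius norm. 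Thus the theorem follows as soon as I establish $b=\sqrt{h_s^2(Ld-L)}$, equivalently $b^2=L(d-1)h_s^2$.

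First I would evaluate $\tr(W_k)$. One has $\tr(\overline{Q^{(\a)}_l}\otimes Q^{(\a)}_g)=\tr(\overline{Q^{(\a)}_l})\,\tr(Q^{(\a)}_g)=1$, and since each $\mathcal{O}^{(\a)}$ is an orthogonal matrix fixing $n_{*}\propto(1,1,\cdots,1)$, all its row sums equal $1$ and — because $(\mathcal{O}^{(\a)})^{T}$ fixes the same vector — all its column sums equal $1$, so $\sum_{g,l=1}^d\mathcal{O}^{(\a)}_{gl}=d$. Substituting into (\ref{w2}) gives $\tr(W_k)=d(1+Lh_s)-h_s\,Ld=d$, whence $a=1/d$ and $(\tr W_k)^2/d^2=1$. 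Next, writing $W_k=\frac{1+Lh_s}{d}\mathbb{I}_d\otimes\mathbb{I}_d-h_sM$ with $M=\sum_{\a=1}^L\sum_{g,l=1}^d\mathcal{O}^{(\a)}_{gl}\,\overline{Q^{(\a)}_l}\otimes Q^{(\a)}_g$ and using $\tr M=Ld$, expanding the square yields $\tr(W_k^2)=(1+Lh_s)^2-2Lh_s(1+Lh_s)+h_s^2\tr(M^2)=1-L^2h_s^2+h_s^2\tr(M^2)$.

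The heart of the computation is $\tr(M^2)=\sum_{\a,\b=1}^L\sum_{g,l,g',l'}\mathcal{O}^{(\a)}_{gl}\mathcal{O}^{(\b)}_{g'l'}\,\tr(\overline{Q^{(\a)}_l}\,\overline{Q^{(\b)}_{l'}})\,\tr(Q^{(\a)}_gQ^{(\b)}_{g'})$, which I would split into the diagonal part $\a=\b$ and the off-diagonal part $\a\ne\b$. For $\a=\b$ one uses $\tr(Q^{(\a)}_gQ^{(\a)}_{g'})=\delta_{gg'}$ (and likewise for the conjugated projectors), so the summand collapses to $\sum_{g,l}(\mathcal{O}^{(\a)}_{gl})^2=\tr((\mathcal{O}^{(\a)})^{T}\mathcal{O}^{(\a)})=d$, contributing $Ld$ in total; for $\a\ne\b$ the mutual unbiasedness $\abs{\braket{e^{(\a)}_g}{e^{(\b)}_{g'}}}^2=1/d$ (and the analogue for the $l$-indices) turns the summand into $\frac1{d^2}\big(\sum_{g,l}\mathcal{O}^{(\a)}_{gl}\big)\big(\sum_{g',l'}\mathcal{O}^{(\b)}_{g'l'}\big)=1$, and there being $L(L-1)$ ordered pairs this contributes $L(L-1)$. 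Hence $\tr(M^2)=L(d+L-1)$, so $\tr(W_k^2)=1-L^2h_s^2+h_s^2L(d+L-1)=1+L(d-1)h_s^2$ and $b^2=\tr(W_k^2)-1=L(d-1)h_s^2=h_s^2(Ld-L)$. Substituting into (\ref{wi}) delivers $D_k(\r)\ge-\frac1{\sqrt{h_s^2(Ld-L)}}\tr(W_k\r)$, as claimed. I anticipate no conceptual difficulty; the only delicate point is the bookkeeping in $\tr(M^2)$ — keeping the $\a=\b$ and $\a\ne\b$ sums apart and invoking the correct property of $\mathcal{O}^{(\a)}$ in each (orthogonality gives $\sum_{g,l}(\mathcal{O}^{(\a)}_{gl})^2=d$ for the diagonal terms, invariance of the all-ones vector gives $\sum_{g,l}\mathcal{O}^{(\a)}_{gl}=d$ for the off-diagonal ones).
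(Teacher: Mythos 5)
Your proposal is correct and takes essentially the same route as the paper: both reduce the theorem to evaluating $b$ in (\ref{wi}) and compute $\tr Z_k=Ld$ and $\tr Z_k^{\dagger}Z_k=Ld+L^2-L$ for $Z_k=\sum_{\a,g,l}\mathcal{O}^{(\a)}_{gl}\overline{Q^{(\a)}_l}\otimes Q^{(\a)}_g$, yielding $b^2=h_s^2(Ld-L)$. The only difference is that you spell out the $\a=\b$ versus $\a\ne\b$ bookkeeping and the row/column-sum property of $\mathcal{O}^{(\a)}$, which the paper states without derivation.
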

\begin{proof}
	Due to the (\ref{wi}), to obtain the lower bound of $D_k(\rho)$, we only need to compute the value of $b$. Let $Z_k=\sum_{g,l=1}^d\sum_{\a=1}^L\mathcal{O}_{gl}^{(\a)}\overline{Q^{(\a)}_l}\otimes Q^{(\a)}_g,$ 
	\begin{align}
&\tr Z_k\nonumber\\
=&\sum_{g,l=1}^d\sum_{\a=1}^L\mathcal{O}_{gl}^{(\a)}\tr\overline{Q^{(\a)}_l}\otimes Q^{(\a)}_g,\nonumber\\
=&\sum_{g,l=1}^d\sum_{\a=1}^L\mathcal{O}_{gl}^{(\a)}\nonumber\\
=&Ld,\label{y1}
\\
&\tr Z_k^{\dagger}Z_k\nonumber\\
=&\sum_{g,l,m,n=1}^d\sum_{\a,\b=1}^L\mathcal{O}_{gl}^{(\a)}\mathcal{O}_{mn}^{(\b)}\tr\overline{Q^{(\a)}_l}\overline{Q^{(\b)}_n}\otimes Q^{(\a)}_gQ^{(\b)}_m\nonumber\\
=&Ld+L^2-L,
\end{align}
then 
\begin{align}
b=&\sqrt{\tr(W_k^{\dagger}W_k)-\frac{(\tr W_k)^2}{d^2}}\nonumber\\
=&\sqrt{h_s^2(Ld-L)},
\end{align}
hence
\begin{align}
D_k(\r_{AB})\ge -\frac{1}{\sqrt{h_s^2(Ld-L)}}\tr W_k\r.
\end{align}
\end{proof}
\section{Conclusion}
Here we have presented families of $k$-positive maps in arbitrary dimensional systems based on the SIC POVMs and MUBs. Based on the $k$-positive maps, we also built the SN witnesses correspondingly. Then we compared the $k$-positive maps built from MUBs and the existing results. When $k=1,$ we found that the conclusion degrades into the map obtained in \cite{chruscinski2018entanglement}. When the $k$-positive maps built from a complete set of MUBs, they belong to the family obtained in \cite{terhal2000schmidt}. At last, we presented a defition of the distance between a bipartite state and the set $S_k.$ Moreover, we showed a lower bound of the distance based on the SN witnesses constructed from MUBs. Due to the important roles that higher dimensional systems played, our results can provide a reference for future work on the study of entanglement theory.

  \section{Acknowledgement}
X. S. was supported by the National Natural Science Foundation of China (Grant No. 12301580) and the Funds of College of Information Science and Technology, Beijing University of Chemical Technology (Grant No. 0104/11170044115).
\bibliographystyle{IEEEtran}
\bibliography{ref}

\section{Appendix}\label{app}
\begin{Lemma}\label{l2}
	Let $\mathcal{M}=\{M_j=\frac{1}{d}\ket{\phi_j}\bra{\phi_j}\}$ be a SIC-POVM in a $d$ dimensional system $\mathcal{H}$, and $\{\ket{i}|i=1,2,\cdots,d\}$ is any orthonormal base of $\mathcal{H}$, when $i\ne j,$
	\begin{align*}
	\sum_{j=1}^{d^2}\tr(M_j\ket{i}\bra{j})=\frac{1}{d^2}
	\end{align*}
\end{Lemma}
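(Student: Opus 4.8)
The plan is to first pin down what the displayed sum really is, and then evaluate it exactly by means of the two-design property of a SIC-POVM. The summation index in the display (written $j$) ranges over the $d^{2}$ effects $M_{1},\dots,M_{d^{2}}$ and must be kept distinct from the two fixed, distinct basis labels $i\ne j$; renaming it $l$, the quantity that actually feeds the final inequality in the proof of Theorem \ref{t1} is the bilinear expression $\sum_{l=1}^{d^{2}}\tr(M_{l}\ket{i}\bra{j})\,\tr(M_{l}\ket{j}\bra{i})=\sum_{l=1}^{d^{2}}|\bra{j}M_{l}\ket{i}|^{2}$. (The purely linear sum $\sum_{l}\tr(M_{l}\ket{i}\bra{j})$ collapses immediately, since $\sum_{l}M_{l}=\mathbb{I}_{d}$ forces it to equal $\langle j|i\rangle=0$; so it cannot be the intended object.) I will therefore compute this bilinear sum and compare it with the asserted value $\tfrac{1}{d^{2}}$.

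Next I would substitute $M_{l}=\tfrac{1}{d}\ket{\phi_{l}}\bra{\phi_{l}}$ and rewrite the sum as a single trace on two copies of $\mathcal{H}$, namely $\tfrac{1}{d^{2}}\tr[(\sum_{l}\ket{\phi_{l}}\bra{\phi_{l}}\otimes\ket{\phi_{l}}\bra{\phi_{l}})(\ket{i}\bra{j}\otimes\ket{j}\bra{i})]$. The key input is the SIC two-design identity $\sum_{l=1}^{d^{2}}\ket{\phi_{l}}\bra{\phi_{l}}\otimes\ket{\phi_{l}}\bra{\phi_{l}}=\tfrac{d}{d+1}(\mathbb{I}_{d}\otimes\mathbb{I}_{d}+\mathbb{S})$, where $\mathbb{S}$ is the swap operator. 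I would either cite this standard property directly, or re-derive it from the Rastegin identity $\sum_{l}|\tr(M_{l}\rho)|^{2}=\tfrac{\tr\rho^{2}+1}{d+d^{2}}$ already quoted in the text: the two-copy operator is supported on the symmetric subspace and is permutation-invariant, so it is a combination of $\mathbb{I}_{d}\otimes\mathbb{I}_{d}$ and $\mathbb{S}$, and the coefficients are fixed by polarizing the quadratic form $\rho\mapsto\tr((\cdots)(\rho\otimes\rho))=\tfrac{d}{d+1}(\tr\rho^{2}+(\tr\rho)^{2})$ obtained from Rastegin.

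Plugging the identity in and using orthonormality $\langle i|j\rangle=0$ then finishes the calculation. The $\mathbb{I}_{d}\otimes\mathbb{I}_{d}$ term contributes $\tr(\ket{i}\bra{j})\,\tr(\ket{j}\bra{i})=|\langle i|j\rangle|^{2}=0$, while the swap term contributes $\tr(\ket{i}\bra{j}\ket{j}\bra{i})=\langle j|j\rangle\,\langle i|i\rangle=1$; hence the bilinear sum equals $\tfrac{1}{d^{2}}\cdot\tfrac{d}{d+1}=\tfrac{1}{d^{2}+d}$. The role of the hypothesis $i\ne j$ is precisely to annihilate the ``diagonal'' identity contribution and leave only the unit swap term.

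Finally I would record the consequence invoked in the main text. Since $\tfrac{1}{d^{2}+d}\le\tfrac{1}{d^{2}}$, each ordered pair contributes at most $\tfrac{1}{d^{2}}$, which is exactly the estimate used under the inequality sign at the last step of the proof of Theorem \ref{t1}; summing over the $k(k-1)$ ordered pairs $i\ne j$ then reproduces the $\tfrac{h^{2}(k-1)}{d^{4}k}$ term there. Thus the displayed equality $=\tfrac{1}{d^{2}}$ is best read as this bound, the exact value being $\tfrac{1}{d^{2}+d}$. I expect the only genuine obstacle to be a clean, self-contained derivation of the two-design identity; once that is in place the orthonormality cancellation is immediate, so I would devote the main effort to making the polarization argument (or the citation to the SIC $2$-design property) airtight.
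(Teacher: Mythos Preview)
Your proposal is correct, and it proceeds along a genuinely different line from the paper. The paper adapts Rastegin's Fourier argument: it builds a SIC-derived orthonormal basis $\{\ket{\Psi},\ket{\Phi_1},\ldots,\ket{\Phi_{d^2-1}}\}$ of $\mathcal{H}\otimes\mathcal{H}$ by discrete Fourier transform of the vectors $\ket{\phi_l}\ket{\phi_l^{*}}$, expands $(\ket{i}\bra{j}\otimes\mathbb{I})\ket{\Psi}=\sum_{k\ge1}a_k\ket{\Phi_k}$, and uses Parseval to relate $\sum_l|p_l|^{2}$ (with $p_l=\tr(M_l\ket{i}\bra{j})$) to the norm of that vector. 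You instead invoke the SIC $2$-design identity $\sum_{l}\ket{\phi_l}\bra{\phi_l}\otimes\ket{\phi_l}\bra{\phi_l}=\tfrac{d}{d+1}(\mathbb{I}_d\otimes\mathbb{I}_d+\mathbb{S})$ directly; this is shorter, makes the role of $\langle i|j\rangle=0$ completely transparent, and avoids the bookkeeping of the Fourier basis.

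Both routes, carried out correctly, give the exact value $\sum_{l}|\tr(M_l\ket{i}\bra{j})|^{2}=\tfrac{1}{d(d+1)}$ rather than $\tfrac{1}{d^{2}}$. You are right that the display in the lemma should be read as the bound $\le\tfrac{1}{d^{2}}$, which is exactly what the last inequality in the proof of Theorem~\ref{t1} uses. (The paper arrives at the value $\tfrac{1}{d^{2}}$ only by retaining the $-\tfrac{d+1}{d^{2}}$ correction that belongs to Rastegin's unit-trace case and by computing $\bra{\Psi}(X\otimes\mathbb{I})^{2}\ket{\Psi}=0$ instead of $\|(X\otimes\mathbb{I})\ket{\Psi}\|^{2}=\tfrac{1}{d}$; your computation is the sharp one.) Your plan to recover the $2$-design identity by polarizing the quoted Rastegin formula is a clean way to make the argument self-contained within the paper's cited toolkit.
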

\begin{proof}
	Here we apply a similar method of the proof of Proposition 5 in \cite{rastegin2014notes}. Let 
	\begin{align}
	\ket{\Psi}=&\frac{1}{d^{3/2}}\sum_{j=1}^{d^2}\ket{\phi_j}\ket{\phi_j^{*}},\label{b1}
\\	\ket{\Phi_k}=&\frac{\sqrt{d+1}}{d^{3/2}}\sum_{j=1}^{d^2}\omega^{k(j-1)}\ket{\phi_j}\ket{\phi_j^{*}},\label{b2}
	\end{align}
	here $k=1,2,\cdots,d^2-1,$ and $\ket{\phi^{*}}$ is a vector such that its components are conjugate to the corresponding components of $\ket{\phi}$. The $d^2$ vectors (\ref{b1}),(\ref{b2})  construct an orthonormal basis of the space $\mathcal{H}\otimes\mathcal{H},$ hence,
	\begin{align}
	\ket{i}\bra{j}\otimes \mathbb{I}\ket{\Psi}=\sum_{k=1}^{d^2-1}a_k\ket{\Phi_k},\label{f9}
	\end{align}
	then
	\begin{align}
	a_k=&\bra{\Phi_k}\ket{i}\bra{j}\otimes \mathbb{I}\ket{\Psi}\nonumber\\
	=&\frac{\sqrt{d+1}}{d^3}\sum_{i,j=1}^{d^2}\omega^{-q(i-1)}\bra{\phi_i}\ket{i}\bra{j}\ket{\phi_j}\bra{\phi_j}\ket{\phi_i}\nonumber\\
	=&\frac{\sqrt{d+1}}{d}\sum_{i=1}^{d^2}\omega^{-q(i-1)}p_i,
	\end{align}
	here $p_i=\frac{\bra{\phi_i}\ket{i}\bra{j}\ket{\phi_i}}{d}.$ Next
	\begin{align}
	\bra{\Psi}(\ket{i}\bra{j}\otimes\mathbb{I})^2\ket{\Psi}=\frac{1}{d}\tr(\ket{i}\bra{j}i\rangle\bra{j})=0,
	\end{align}
	Through (\ref{f9}), 
	\begin{align}
	\sum_{k=1}^{d^2-1}a_k^{*}a_k=(d+1)\sum_{j=1}^{d^2}p_j^2-\frac{d+1}{d^2},
	\end{align}
	that is,
	\begin{align*}
\sum_{j=1}^{d^2}p_j^2=\frac{1}{d^2}
	\end{align*}
\end{proof}
\begin{Lemma}\label{l3}
	 Let $\{\ket{e^{\a}_i}|i=1,2,\cdots,m\}_{\a=1}^L$ be the MUBs in a $d$ dimensional system $\mathcal{H}$, and $\mathcal{N}^{(\a)}=\{Q_i^{(\a)}=\ket{e^{\a}_i}\bra{e^{\a}_i}|i=1,2,\cdots,m\}_{\a=1}^L$. Assume $\{\ket{i}|i=1,2,\cdots,d\}$ is any orthonormal base of $\mathcal{H}$, when $i\ne j,$
	\begin{align*}
	\sum_{\a}^L\sum_{i=1}^{d^2}\tr(Q_i^{(\a)}\ket{i}\bra{j})\le\frac{L}{d}
	\end{align*}
\end{Lemma}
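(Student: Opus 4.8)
I would first restate the lemma in the form it is actually used (the displayed summation index clashes with the fixed labels, and the quantity that enters the bound for~(\ref{f6}) carries a modulus squared): for any two orthonormal vectors $\ket{a},\ket{b}$ of $\mathcal{H}_d$,
\[
\sum_{\a=1}^{L}\sum_{n=1}^{d}\big|\tr(Q^{(\a)}_n\ketbra{a}{b})\big|^{2}
=\sum_{\a=1}^{L}\sum_{n=1}^{d}|\braket{e^{\a}_n}{a}|^{2}\,|\braket{e^{\a}_n}{b}|^{2}\ \le\ \frac{L}{d},
\]
with the unitary $V$ and the particular orthonormal basis immaterial. The plan is to subtract the identity part of each $Q^{(\a)}_n$ and exploit the near-orthonormality of the traceless remainders.

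First I would set $G^{(\a)}_n:=Q^{(\a)}_n-\mathbb{I}_d/d$. From $\sum_nQ^{(\a)}_n=\mathbb{I}_d$ we get $\sum_nG^{(\a)}_n=0$, and the MUB relations $\tr(Q^{(\a)}_nQ^{(\b)}_m)=\delta_{\a\b}\delta_{nm}+(1-\delta_{\a\b})/d$ give, for the real Hilbert--Schmidt product $\langle X,Y\rangle=\tr(XY)$ on Hermitian operators, $\langle G^{(\a)}_n,G^{(\b)}_m\rangle=\delta_{\a\b}\big(\delta_{nm}-1/d\big)$. Hence for each fixed $\a$ the Gram matrix of $\{G^{(\a)}_n\}_{n=1}^{d}$ is $\mathbb{I}_d-\frac1dJ$ ($J$ the all-ones matrix), whose nonzero eigenvalues all equal $1$; so $\{G^{(\a)}_n\}_{n=1}^{d}$ is a Parseval (tight, constant~$1$) frame for the $(d-1)$-dimensional real subspace $V_\a:=\lin\{G^{(\a)}_n\}_{n=1}^{d}$ of traceless Hermitian operators, and $V_\a\perp V_\b$ for $\a\ne\b$. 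Writing $A_0:=\proj{a}-\mathbb{I}_d/d$ and $B_0:=\proj{b}-\mathbb{I}_d/d$ (both traceless), I would expand $\tr(Q^{(\a)}_n\proj{a})=1/d+\langle G^{(\a)}_n,A_0\rangle$ (and the same for $b$), multiply, sum over $n$, and use $\sum_nG^{(\a)}_n=0$ to cancel the cross terms; with the Parseval identity $\sum_n\langle X,G^{(\a)}_n\rangle G^{(\a)}_n=\Pi_\a X$ ($\Pi_\a$ the orthogonal projection onto $V_\a$) this gives $\sum_n\tr(Q^{(\a)}_n\proj{a})\tr(Q^{(\a)}_n\proj{b})=1/d+\langle\Pi_\a A_0,\Pi_\a B_0\rangle$. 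Summing over $\a$ and using that $\Pi:=\sum_\a\Pi_\a$ is (the $V_\a$ being mutually orthogonal) the orthogonal projection onto $\bigoplus_\a V_\a$, I arrive at
\[
\sum_{\a=1}^{L}\sum_{n=1}^{d}|\braket{e^{\a}_n}{a}|^{2}|\braket{e^{\a}_n}{b}|^{2}=\frac{L}{d}+\langle\Pi A_0,\Pi B_0\rangle .
\]

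It then remains to bound the correction term, and a one-line computation gives $\langle A_0,B_0\rangle=\tr(\proj{a}\proj{b})-1/d=|\braket{a}{b}|^{2}-1/d=-1/d<0$. When the $L$ MUBs form a \emph{complete} set, $\dim\big(\bigoplus_\a V_\a\big)=L(d-1)=(d+1)(d-1)=d^{2}-1$ equals the dimension of the whole space of traceless Hermitian operators, so $\Pi$ is the identity there, $\langle\Pi A_0,\Pi B_0\rangle=\langle A_0,B_0\rangle=-1/d\le 0$, and the lemma follows (indeed the left-hand side then equals $(L-1)/d$). The hard part — and I expect the only genuinely hard part — is the \emph{incomplete} case: one must show $\langle\Pi A_0,\Pi B_0\rangle\le 0$ even though $\Pi$ is a proper projection, and one cannot conclude this from $\langle A_0,B_0\rangle<0$ by itself, because projecting can turn a negative inner product positive. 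Settling the general case requires further MUB-specific input — essentially, control of the component of $\proj{a}-\proj{b}$ in $\big(\bigoplus_\a V_\a\big)^{\perp}$, which is nonzero exactly when $\ket{a}$ and $\ket{b}$ give identical outcome statistics in every one of the $L$ bases. At worst one always has the crude bound $(L+d-2)/d$ on the left-hand side (write the displayed identity as $\frac{L-1}{d}-\langle(1-\Pi)A_0,(1-\Pi)B_0\rangle$ and apply Cauchy--Schwarz with $\|A_0\|^{2}=\|B_0\|^{2}=(d-1)/d$), and for the applications in the Remarks, where the MUBs are a complete set, the argument above already suffices.
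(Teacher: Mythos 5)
Your derivation is correct as far as it goes, and it is essentially the only proof available: the paper omits its own argument for this lemma (deferring to ``similar to Lemma \ref{l2}''), and the Rastegin-style basis expansion used there is exactly your Parseval-frame computation in the special case where the operator system is informationally complete. Your identity
\begin{align*}
\sum_{\a=1}^{L}\sum_{n=1}^{d}\big|\tr(Q^{(\a)}_n\ketbra{a}{b})\big|^{2}=\frac{L}{d}+\langle\Pi A_0,\Pi B_0\rangle
\end{align*}
is right, as are the complete-set conclusion $(L-1)/d$ and the crude bound $(L+d-2)/d$. The one thing you left open — whether $\langle\Pi A_0,\Pi B_0\rangle\le 0$ for an incomplete set — should be resolved in the negative: the lemma, in the form actually needed for Theorem 2, is \emph{false} when $L<d+1$. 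Take $\ket{a}=(\ket{e^{1}_1}+\ket{e^{1}_2})/\sqrt{2}$ and $\ket{b}=(\ket{e^{1}_1}-\ket{e^{1}_2})/\sqrt{2}$. The first basis alone contributes $\sum_n p_n q_n=\tfrac14+\tfrac14=\tfrac12$, and every other basis contributes a nonnegative amount, so the left-hand side is at least $1/2$, which exceeds $L/d$ whenever $d>2L$ (e.g.\ $L=2$, $d=5$). In your language, $\Pi_1 A_0=\Pi_1 B_0=\diag(\tfrac12,\tfrac12,0,\dots,0)-\mathbb{I}/d$, so $\langle\Pi A_0,\Pi B_0\rangle\ge \|\Pi_1A_0\|^2=(d-2)/(2d)>0$: projection onto the diagonal subspace does flip the sign, exactly as you feared.

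The consequence is worth recording: Lemma \ref{l3} and hence the bound $h_s^2 Lk(k-1)/d$ in the proof of Theorem 2 are only justified for a \emph{complete} set of MUBs ($L=d+1$), where your 2-design/Parseval argument gives the stronger identity $(L-1)/d$; for incomplete sets either the constant $h_s$ must be weakened (your $(L+d-2)/d$ bound would do this) or additional structure must be assumed. So your proposal is not missing an idea — it correctly isolates, and your counterexample-hunting instinct correctly diagnoses, a gap in the paper itself.
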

The proof of Lemma \ref{l3} is similar to that of Lemma \ref{l2}, here we omit it. 
\end{document}